%%%%%%%%%%%%%%%%%%%%%%%%%%%%%%%%%%%%%%%%%%%%%%%%%%%%%%%%%%%%%%%%%%%%%%%%%%%%%%%%
%2345678901234567890123456789012345678901234567890123456789012345678901234567890
%        1         2         3         4         5         6         7         8

\documentclass[letterpaper, 10 pt, conference]{ieeeconf}  % Comment this line out if you need a4paper

\IEEEoverridecommandlockouts                              % This command is only needed if 
                                                          % you want to use the \thanks command

\overrideIEEEmargins                                      % Needed to meet printer requirements.

%In case you encounter the following error:
%Error 1010 The PDF file may be corrupt (unable to open PDF file) OR
%Error 1000 An error occurred while parsing a contents stream. Unable to analyze the PDF file.
%This is a known problem with pdfLaTeX conversion filter. The file cannot be opened with acrobat reader
%Please use one of the alternatives below to circumvent this error by uncommenting one or the other
%\pdfobjcompresslevel=0
%\pdfminorversion=4

% See the \addtolength command later in the file to balance the column lengths
% on the last page of the document

% The following packages can be found on http:\\www.ctan.org
\usepackage{graphicx} % for pdf, bitmapped graphics files
\usepackage{amsmath} % assumes amsmath package installed
\DeclareMathOperator*{\argmax}{arg\,max}
\usepackage{amssymb}  % assumes amsmath package installed
\usepackage{algorithm}
\usepackage{algorithmic}
\usepackage{tikz}
\usepackage{color}
\usepackage{easy-todo}

\newtheorem{remark}{Remark}
\newtheorem{theorem}{Theorem}

\title{\LARGE \bf
Data-Driven Distributed Stochastic Model Predictive Control with Closed-Loop Chance Constraint Satisfaction
}

\author{Simon Muntwiler*, Kim P. Wabersich*, Lukas Hewing, and Melanie N. Zeilinger% <-this % stops a space
\thanks{*The first two authors contributed equally to this work.}
\thanks{This work was supported by the Swiss National Science Foundation under grant no. PP00P2\textunderscore157601/1. Simon Muntwiler's research was supported by funds from the Bosch Research Foundation im Stifterverband.
}% <-this % stops a space
\thanks{The authors are members of the Institute for Dynamic Systems and
	Control, ETH Z\"urich, Z\"urich, Switzerland {\tt\small\{simonmu,wkim,lhewing,} {\tt\small mzeilinger\}@ethz.ch}.
}%
}

\bibliographystyle{IEEEtran}

\begin{document}

\setlength{\textfloatsep}{15pt}

\maketitle
\thispagestyle{empty}
\pagestyle{empty}

\begin{tikzpicture}[overlay, remember picture]
\node[anchor=south,yshift=8pt] at (current page.south) {\fbox{\parbox{0.99\textwidth}{\footnotesize \textbf{Published in: 2021 European Control Conference (ECC). DOI: 10.23919/ECC54610.2021.9655214.}\\ 
\textcopyright \space 2022 IEEE. Personal use of this material is permitted. Permission from IEEE must be obtained for all other uses, in any current or future media, including reprinting/republishing this material for advertising or promotional purposes, creating new collective works, for resale or redistribution to servers or lists, or reuse of any copyrighted component of this work in other works.}}};
\end{tikzpicture}
%%%%%%%%%%%%%%%%%%%%%%%%%%%%%%%%%%%%%%%%%%%%%%%%%%%%%%%%%%%%%%%%%%%%%%%%%%%%%%%%
\begin{abstract}
Distributed model predictive control methods for uncertain systems often suffer from considerable conservatism and can tolerate only small uncertainties due to the use of robust formulations that are amenable to distributed design and optimization methods.
In this work, we propose a distributed stochastic model predictive control (DSMPC) scheme for dynamically coupled linear discrete-time systems subject to unbounded additive disturbances that are potentially correlated in time.
An indirect feedback formulation ensures recursive feasibility of the DSMPC problem, and a data-driven, distributed and optimization-free constraint tightening approach allows for exact satisfaction of chance constraints during closed-loop control, addressing typical sources of conservatism.
The computational complexity of the proposed controller is similar to nominal distributed MPC.
The approach is demonstrated in simulation for the temperature control of a large-scale data center subject to randomly varying computational loads.
\end{abstract}

%%%%%%%%%%%%%%%%%%%%%%%%%%%%%%%%%%%%%%%%%%%%%%%%%%%%%%%%%%%%%%%%%%%%%%%%%%%%%%%%
\section{INTRODUCTION}

Sensing and communication capabilities are increasingly available in many technical systems, allowing interconnected systems to measure information locally and share it with other agents to optimize a common global objective.
In the case of manufacturing systems, for instance, multiple machines may be necessary to assemble a product, and the use of each machine can be optimally scheduled based on shared information between the different production steps, increasing the overall efficiency.
Solving such a large-scale control problem in a centralized manner, however, often results in intractable communication requirements or computationally infeasible optimization problems \cite{Christofides2013}.
Distributed control algorithms address these issues by exploiting the distributed structure of the system and carry out computations locally while only requiring state information from neighboring subsystems.

For large-scale systems in particular, deriving an accurate system model and description of its operating conditions is a challenging task. 
In model predictive control (MPC), the resulting uncertainties are often modeled through additive disturbances acting on the system.
These can be addressed in a robust fashion, guaranteeing constraint satisfaction under \emph{all} disturbance realizations in a compact set~\cite{Bemporad1999}. 
Distributed robust approaches, however, tend to introduce conservatism as a result of, e.g., enforcing a distributed structure on a robust positive invariant set~\cite{Conte2013} or handling dynamic couplings as uncertainties~\cite{Jia2002}.
A promising alternative is a stochastic approach, where an underlying stochastic characteristic of the disturbances is considered and constraints are satisfied with a certain probability level \cite{Mesbah2016}, providing a quantified assessment of risk.
The closed-loop analysis for a stochastic MPC approach, however, is typically more challenging and often relies on bounded disturbance distributions~\cite{Lorenzen2017}, or unimodality and symmetric tightening assumptions~\cite{Hewing2018e,Mark2019,Mark2020}.
Existing distributed stochastic MPC (DSMPC) frameworks assume Gaussian disturbance distributions \cite{Mark2019} or general mean-variance information of i.i.d.\ disturbances \cite{Mark2020}.
In addition, chance constraints are usually enforced for all agents simultaneously, which can again introduce conservatism, in particular for large-scale systems.

\textit{Contributions:} This paper introduces a DSMPC scheme for dynamically coupled linear systems subject to additive non-i.i.d.\ disturbances with potentially unbounded support.
The goal is to regulate each local subsystem to its respective set-point, while
satisfying local chance constraints with a given probability level, and thereby ensuring the safety of each local subsystem.
Instead of assuming a given distribution of the disturbance, we only assume having access to samples of the disturbances either from experiments or simulations, resulting in a data-driven MPC formulation~\cite{Hewing2020}.
In contrast to existing DSMPC approaches, recursive feasibility of the proposed DSMPC optimization problem is ensured by relying on indirect feedback as introduced in \cite{Hewing2018}, where the actual measured state only enters the cost rather than the constraints.

A key contribution of the presented work is a data-driven and distributed tightening approach based on scenario optimization techniques~\cite{Campi2011,Hewing2019} to handle chance constraints on states and inputs as deterministic distributed constraints on nominal system states and inputs, while allowing us to provide guarantees for closed-loop chance constraint satisfaction for each individual agent.
In contrast to related schemes (e.g.,~\cite{Mark2019}), the constraint tightening does not introduce additional conservatism compared to a centralized solution and allows us to handle the local chance constraints in a non-conservative manner, meaning that the desired chance constraint probability level is realized exactly at time steps for which a given initial condition and any disturbance realization lead to the tightened constraint on the nominal system being active during closed-loop operation.
Moreover, the resulting optimization problem with respect to nominal states and inputs has computational complexity comparable to a nominal distributed MPC problem.

\textit{Related Work:} DSMPC algorithms based on distributional information on the disturbances have been introduced for linear systems with zero-mean i.i.d. additive Gaussian disturbances in \cite{Mark2019}, and extended to output-feedback in~\cite{Mark2020}.
These techniques similarly consider unbounded disturbance distributions, but cannot ensure recursive feasibility of the MPC problem directly; instead they make use of a recovery initialization in the case of infeasibility.
As a result, constraint satisfaction is guaranteed under symmetric tightening and unimodal disturbances only.
For bounded disturbances and dynamically decoupled systems with coupling constraints, approaches ensuring constraint satisfaction were presented in~\cite{Dai2016}.
A DSMPC framework based on disturbance samples, rather than distributional information, for linear systems with parametric uncertainty and additive disturbances was investigated in \cite{Rostampour2018}.
The constraints in the online DSMPC problem, however, need to be fulfilled for the entire set of disturbance samples, increasing the computational complexity.
Recursive feasibility and stability were not investigated.

\textit{Notation:} A stacked vector $v \in \mathbb{R}^n$ consisting of subvectors $v_i \in \mathbb{R}^{n_i}$ with $i \in \mathcal{M} \subseteq \mathbb{N}$ is denoted as $v = \mathrm{col}_{i \in \mathcal{M}}(v_i)$.
The distribution $\mathcal{Q}$ of a random variable $w$ is denoted as $w \sim \mathcal{Q}$, probabilities and conditional probabilities as $\mathrm{Pr}(A)$ and $\mathrm{Pr}(A|B)$ respectively.
By $\mathbb{E}_w(x)$ we denote the expected value of $x$ w.r.t. the random variable $w$.

\section{PROBLEM FORMULATION} \label{sec:preliminaries}

We consider a network of $M \in \mathbb{N}$ time-invariant coupled linear subsystems with discrete-time dynamics
\begin{equation}
x_i(t\!+\!1) = \left(\sum_{j=1}^{M}A_{ij}x_j(t)\right) + B_i u_i(t) + G_i w_i(t), \label{eq:local_system_dynamics}
\end{equation}
with local state $x_i(t) \in \mathbb{R}^{n_i}$, input $u_i(t) \in \mathbb{R}^{m_i}$ and stochastic disturbance $w_i(t) \in \mathbb{R}^{p_i}$ for each subsystem $i$ at time step $t$, where $A_{ij} \in \mathbb{R}^{n_i \times n_j}$, $B_i \in \mathbb{R}^{n_i \times m_i}$, and $G_i \in \mathbb{R}^{n_i \times p_i}$.
We denote the set of indices of all subsystems as $\mathcal{M}= \{1, \ldots, M\}$.
The set of neighbors $\mathcal{N}_i$ of subsystem~$i$ contains all indices of subsystems~$j$, for which $A_{ij}$ includes nonzero entries.
We assume that each subsystem is able to exchange information with all other subsystems in its neighborhood.
The local system dynamics of subsystem~$i$ can be written as
\begin{equation}
x_i(t\!+\!1) = A_{\mathcal{N}_i}x_{\mathcal{N}_i}(t) + B_i u_i(t) + G_i w_i(t),
\label{eq:neighborhood_system_dynamics}
\end{equation}
where $A_{\mathcal{N}_i} \in \mathbb{R}^{n_i \times n_{\mathcal{N}_i}}$ and $x_{\mathcal{N}_i}(t) = \mathrm{col}_{j \in \mathcal{N}_i}(x_j(t)) \in \mathbb{R}^{n_{\mathcal{N}_i}}$. 
Each subsystem $i$ is subject to $n_i^x$ half-space chance constraints on the local states and $n_i^u$ half-space chance constraints on the local inputs
\begin{subequations}
	\begin{align}
		\mathrm{Pr}(h_{i,j}^{x\top}x_i(t) \le 1) \ge p^x_{i,j}, \ j \in  \{1, \ldots, n_i^x\}, \\
		\mathrm{Pr}(h_{i,j}^{u\top}u_i(t) \le 1) \ge p^u_{i,j}, \ j \in  \{1, \ldots, n_i^u\},
	\end{align}\label{eq:constraints}%
\end{subequations}
where $h_{i,j}^x \in \mathbb{R}^{n_i}$, $h_{i,j}^u \in \mathbb{R}^{m_i}$ and the probabilities are understood conditioned on the initial state.

The objective is to control the distributed stochastic system over a potentially large, but finite, task horizon $\bar{N}$ while satisfying the chance constraints (\ref{eq:constraints}) at every time step $t$.
The stochastic disturbance sequence over the task horizon is assumed to be distributed according to $W = \left[\textrm{col}_{i\in\mathcal{M}}(w_i(0))^{\top},\ldots, \textrm{col}_{i\in\mathcal{M}}(w_i(\bar{N}))^{\top}\right]^{\top} \sim  \mathcal{Q}$, which can be a non-i.i.d.\ and correlated disturbance sequence with unbounded support.
It is not necessary to know the distribution of the disturbances, but we assume to have access to samples from the distribution over the entire task horizon.
Handling unbounded disturbances is especially important when the distribution and possibly existing bounds are not known in advance, with normal distributions as important special case.

In this paper, we introduce a distributed stochastic MPC scheme to approximate the solution of the optimal stochastic control problem by solving a simplified problem over a shorter horizon $N \ll \bar{N}$ in a receding horizon fashion.
The local system dynamics (\ref{eq:neighborhood_system_dynamics}) are split into a nominal state $z_i(t)$ and error $e_i(t)$ such that $x_i(t) = z_i(t) + e_i(t)$, as well as a nominal input $v_i(t)$ and potentially nonlinear tube controller $\pi_{i}(e_{\mathcal{N}_i}(t))$, resulting in
\begin{subequations}
	\begin{align} 
		z_i(t\!+\!1) &= A_{\mathcal{N}_i}z_{\mathcal{N}_i}(t) + B_i v_i(t), \label{eq:nominal_dynamics} \\
		e_i(t\!+\!1) &= A_{\mathcal{N}_i}e_{\mathcal{N}_i}(t) + B_i\pi_{i}(e_{\mathcal{N}_i}(t)) + G_iw_i(t), \label{eq:error_dynamics} \\
		x_i(t) &= z_i(t) + e_i(t), \\ 
		u_i(t) &= v_i(t) + \pi_{i}(e_{\mathcal{N}_i}(t)), \label{eq:split_dynamics_controller}
	\end{align} \label{eq:split_dynamics}%
\end{subequations}
with initial condition $z_i(0) = x_i(0)$, and therefore $e_i(0) = 0$, $z_{\mathcal{N}_i}(t) = \mathrm{col}_{j \in \mathcal{N}_i}(z_j(t)) \in \mathbb{R}^{n_{\mathcal{N}_i}}$ and $e_{\mathcal{N}_i}(t) = \mathrm{col}_{j \in \mathcal{N}_i}(e_j(t)) \in \mathbb{R}^{n_{\mathcal{N}_i}}$.
The MPC problem optimizes the nominal input $v_i(t)$, while the tube controller $\pi_{i}(e_{\mathcal{N}_i}(t))$ is used to regulate deviations from the nominally planned trajectory.
A simple linear feedback controller stabilizing the error dynamics \eqref{eq:error_dynamics} can be obtained in a distributed manner using, e.g., the methods introduced in \cite{Conte2016}.
\begin{remark}
	It is possible to design a nonlinear tube controller $\pi_{i}$, such as, e.g., a linear feedback controller with saturation, allowing for the treatment of hard input constraints (e.g., due to physical actuator limits, see also \cite{Hewing2019}).
\end{remark}

In Section \ref{sec:dsmpc}, we introduce a recursively feasible DSMPC scheme based on an indirect feedback formulation~\cite{Hewing2018}.
In Section~\ref{sec:tightening}, we then detail the design of tightened constraints on the nominal system states and inputs, which is performed in an optimization-free and distributed manner, and ensures closed-loop chance constraint satisfaction.

\section{DISTRIBUTED STOCHASTIC MODEL PREDICTIVE CONTROL} \label{sec:dsmpc}

We aim to solve the stochastic control task over the task horizon $\bar{N}$ by employing a receding horizon control formulation over a shortened horizon $N$, i.e.,\ the problem is repeatedly solved at each time step based on the currently measured state at time step $t$, i.e., $x_i(t)$.
Most commonly, robust and stochastic MPC schemes that are based on the separation into a nominal and an error system as in~\eqref{eq:split_dynamics} initialize the nominal dynamics with the currently measured state $x_i(t)$.
In the stochastic setting, this can lead to feasibility issues, in particular due to the potentially unbounded nature of the stochastic disturbance~\cite{Hewing2018e}.
Here, we rely on an indirect feedback stochastic MPC formulation~\cite{Hewing2018}, resulting in the following DSMPC problem:
\newpage

\begin{subequations}
	\begin{align}
	\min_{\textbf{v}} & \sum_{i=1}^{M}\mathbb{E}_{W_i(t)}\left(l_f(x_i(N|t)) + \sum_{k=0}^{N-1}l_{t+k}(x_i(k|t),u_i(k|t))\right) \label{eq:dsmpc_objective} \\
	\text{s.t. } & \forall i \in \mathcal{M}: \nonumber \\
	& x_i(0|t) = x_i(t), \, z_i(0|t) = z_i(1|t\!-\!1), \,  e_i(0|t) = e_i(t) \label{eq:dsmpc_initialization}\\
	& z_i(N|t) = 0 \\
	& W_i(t) = \left[w_i(0|t)^{\top}, \ldots, w_i(N|t)^{\top}\right]\sim\mathcal Q_i(t) \label{eq:dsmpc_disturbance}\\
	& \forall k \in \{0, \ldots, N - 1\}: \nonumber \\
	& \quad z_i(k+1|t) = A_{\mathcal{N}_i}z_{\mathcal{N}_i}(k|t) + B_i v_i(k|t) \\
	& \quad x_i(k+1|t) = z_i(k+1|t) + e_i(k+1|t) \label{eq:dsmpc_predicted_state} \\
	& \quad e_i(k+1|t) = A_{\mathcal{N}_i}e_{\mathcal{N}_i}(k|t) \nonumber \\
	& \qquad \qquad + B_i\pi_{i}(e_{\mathcal{N}_i}(k|t)) + G_iw_i(k|t) \label{eq:dsmpc_error_dynamics} \\
	& \quad u_i(k|t) = v_i(k|t) + \pi_{i}(e_{\mathcal{N}_i}(k|t)) \label{eq:dsmpc_predicted_input} \\
	& \quad h_{i,j}^{x\top}z_i(k|t) \le 1 - c_{i,j,t+k}^x \forall j \in  \{1, \ldots, n_i^x\} \label{eq:dsmpc_tightened_state} \\
	& \quad h_{i,j}^{u\top}v_i(k|t) \le 1 - c_{i,j,t+k}^u \forall j \in  \{1, \ldots, n_i^u\} \label{eq:dsmpc_tightened_input}
	\end{align} \label{eq:dsmpc_problem}%
\end{subequations}
where $\textbf{v} = \mathrm{col}_{k \in \{0,\ldots,N-1\}}(v(k|t))$ with $v(k|t)=\mathrm{col}_{i \in \mathcal{M}}(v_i(k|t))$.
For $k \in \{0, \ldots , N \}$ and every subsystem $i$, the vector $x_i(k|t) \in \mathbb{R}^{n_i}$ denotes the $k$-steps ahead predicted state computed at time step $t$, and $z_i(k|t)$, $v_i(k|t)$, $e_i(k|t)$ and $u_i(k|t)$ the predicted nominal state, nominal input, error and input, respectively.
The input~\eqref{eq:split_dynamics_controller} applied to system~\eqref{eq:local_system_dynamics} is then defined by the solution $\textbf{v}^*$ as 
\begin{align} v_i(t) = v_i^*(0|t). \label{eq:nominal_input}
\end{align}

In problem~\eqref{eq:dsmpc_problem}, the \emph{nominal} state $z_i(0|t)$ is initialized at each time step $t$ with the first predicted nominal state $z_i(1|t\!-\!1)$ obtained at time step $t\!-\!1$, while the state measurement $x_i(t)$ initializes $x_i(0|t)$ (see \eqref{eq:dsmpc_initialization}).
Note that via the optimization of the objective \eqref{eq:dsmpc_objective} with respect to $x_i(k|t)$, feedback is also introduced on the nominal state evolution $z_i(t)$, hence it is referred to as indirect feedback.

As a result of this initialization, the nominal dynamics in \eqref{eq:nominal_dynamics} are valid in closed-loop operation. Note that this is not so, if $z_i(0|t)$ is optimized, as is often the case in robust tube MPC formulations~\cite{Mayne2005}, or if it is set equal to the measured state $x_i(t)$.
From this nominal state evolution, it follows that the closed-loop error evolves independently of the MPC optimization according to (\ref{eq:error_dynamics}) and can therefore be simulated forward by only having access to samples of the disturbances $w_i(t)$.
This allows us to precompute the error prediction \eqref{eq:error_dynamics} prior to solving the optimization problem \eqref{eq:dsmpc_problem}.

Problem \eqref{eq:dsmpc_problem} makes use of tightened constraints on the nominal state and input of each subsystem in \eqref{eq:dsmpc_tightened_state} and \eqref{eq:dsmpc_tightened_input} to realize the chance constraints in \eqref{eq:constraints}.
While the local error feedback $\pi_i$ aims at reducing deviations from the nominally planned trajectory $z_i(t)$, the unknown disturbances $w_i(t)$ cause a non-vanishing error $e_i(t)$ for all $t\geq 0$, which can cause closed-loop constraint violations, even if $h_{i,j}^{x\top}z_{i}(t) \le 1$ and $h_{i,j}^{u\top}v_i(t) \le 1$ holds.
Similar to ideas from robust MPC, we therefore introduce tightened half-space constraints using suitable tightening values $c_{i,j,t}^x$ and $c_{i,j,t}^u$.
In Section~\ref{sec:tightening}, we introduce a data-driven and distributed method to compute $c_{i,j,t}^x$ and $c_{i,j,t}^u$ for the entire task horizon $\bar{N}$ depending on the distribution of the trajectories of the error system~\eqref{eq:error_dynamics}, such that the chance constraints (\ref{eq:constraints}) are fulfilled non-conservatively with the desired probability level.
By non-conservativeness, we refer to the fact that if a given initial condition and any disturbance realization lead to a nominal constraint being active, the corresponding true chance constraint \eqref{eq:constraints} is violated exactly with the specified probability level.

The expectation in the objective \eqref{eq:dsmpc_objective} is taken with respect to a disturbance sequence $W_i(t)$ over the prediction horizon distributed according to $\mathcal Q_i(t)$ (\ref{eq:dsmpc_disturbance}).
For disturbances correlated in time, this addresses the fact that past disturbances provide information which can be utilized in the optimization of the cost.
This information can be used, e.g.,\ by considering the marginal disturbances for each agent conditioned on past disturbance realizations
$$p(W_i(t)) = p\left(W_i(t)|\left[w_{\mathcal{N}_i}(0)^{\top}, \ldots, w_{\mathcal{N}_i}(t - 1)^{\top}\right]^{\top}\right).$$%
The expectation in \eqref{eq:dsmpc_objective} can be evaluated for the special case of i.i.d. disturbances and quadratic costs by considering only the mean of the predicted state and input, see \cite{Hewing2018}.
For a general cost, it is not possible to analytically evaluate the expectation in (\ref{eq:dsmpc_objective}), but it can be approximated based on $N_{s,i}^{MPC}$ samples of the disturbance sequence $W_i(t)$ over the prediction horizon N for each subsystem $i$.
The number of samples trade off prediction accuracy against online computational complexity.

The optimization problem in (\ref{eq:dsmpc_problem}) can be solved in a distributed manner using distributed optimization techniques, see e.g., \cite{Bertsekas1989,Rostami2017}, since the objective and constraints are only coupled between neighboring subsystems. This results in a fully distributed offline and online procedure.

\begin{remark}
	For simplicity, we use a terminal equality constraint in~\eqref{eq:dsmpc_problem}.
	A less restrictive terminal constraint as similarly proposed in \cite{Hewing2019} could be integrated by using a distributed robust positive invariant terminal set, e.g., based on the results introduced in \cite{Conte2013}.
\end{remark}

Recursive feasibility of the distributed MPC scheme (\ref{eq:dsmpc_problem}) can be directly established using results from standard nominal MPC, because the stochastic variables only affect the objective of problem (\ref{eq:dsmpc_problem}) and the constraints are on the nominal states and inputs.
\begin{theorem}\label{thm:recursive_feasibility}
	If the optimization problem (\ref{eq:dsmpc_problem}) is feasible for $x_i(0) = z_i(0)$, then applying the distributed control input \eqref{eq:split_dynamics_controller} with \eqref{eq:nominal_input} to the dynamic system (\ref{eq:local_system_dynamics}), results in problem~\eqref{eq:dsmpc_problem} being feasible for all time steps	$0 \le t \le \bar{N}-N$.
\end{theorem}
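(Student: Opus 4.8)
The plan is to use the standard receding-horizon shifting argument from nominal MPC, exploiting precisely the observation made above: the stochastic quantities and the sampled disturbance sequences enter only the objective \eqref{eq:dsmpc_objective} (and the distribution specification \eqref{eq:dsmpc_disturbance}), while every genuine constraint in \eqref{eq:dsmpc_problem} acts on the nominal state $z_i$ and the nominal input $v_i$. I would argue by induction on $t$. The base case $t=0$ is the hypothesis $x_i(0)=z_i(0)$ together with the stated feasibility. For the induction step, assume \eqref{eq:dsmpc_problem} is feasible at time $t$ with optimizer $\textbf{v}^*$ and associated nominal trajectory $z_i^*(\cdot|t)$, $v_i^*(\cdot|t)$ for all $i\in\mathcal{M}$; I then exhibit a feasible candidate at time $t+1$.

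The candidate is the shifted plan completed by the terminal control action: for each $i$, set $v_i(k|t+1)=v_i^*(k+1|t)$ for $k=0,\ldots,N-2$ and $v_i(N-1|t+1)=0$. I would first check that the nominal dynamics \eqref{eq:nominal_dynamics}, the indirect-feedback initialization $z_i(0|t+1)=z_i(1|t)=z_i^*(1|t)$ from \eqref{eq:dsmpc_initialization}, and induction force $z_i(k|t+1)=z_i^*(k+1|t)$ for $k=0,\ldots,N-1$; using $z_j^*(N|t)=0$ for all $j\in\mathcal{N}_i$ and $v_i(N-1|t+1)=0$ then gives $z_i(N|t+1)=A_{\mathcal{N}_i}z_{\mathcal{N}_i}(N-1|t+1)+B_iv_i(N-1|t+1)=0$, so the terminal constraint holds. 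Here it is essential that the initialization sets $z_i(0|t+1)$ to the previously planned $z_i^*(1|t)$ rather than to the measured state $x_i(t+1)$, which need not satisfy the tightened constraints. The remaining equalities \eqref{eq:dsmpc_predicted_state}--\eqref{eq:dsmpc_predicted_input} and the error recursion \eqref{eq:dsmpc_error_dynamics} merely \emph{define} $e_i$, $x_i(k|t+1)$ and $u_i(k|t+1)$; since $e_i$ evolves independently of $\textbf{v}$ and is consistently re-initialized via $e_i(0|t+1)=e_i(t+1)$, they impose no further restriction.

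It then remains to verify the tightened inequalities \eqref{eq:dsmpc_tightened_state}--\eqref{eq:dsmpc_tightened_input}. Because the tightening constants are indexed by absolute time $t+k$, the shift aligns them exactly: for $k=0,\ldots,N-2$ we have $h_{i,j}^{x\top}z_i(k|t+1)=h_{i,j}^{x\top}z_i^*(k+1|t)\le 1-c_{i,j,t+1+k}^x$ directly from feasibility at time $t$, and analogously for the input constraints; for the final stage $k=N-1$ we have $z_i(N-1|t+1)=z_i^*(N|t)=0$ and $v_i(N-1|t+1)=0$, so the constraints reduce to $0\le 1-c_{i,j,t+N}^x$ and $0\le 1-c_{i,j,t+N}^u$. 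Since this candidate is built locally for each $i$ from only neighboring nominal trajectories and every subsystem shifts identically, the construction is consistent across the network, and feasibility of \eqref{eq:dsmpc_problem} at time $t+1$ follows, closing the induction for all $0\le t\le\bar{N}-N$.

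The argument is essentially routine; the one point requiring care is the alignment of the \emph{time-varying} tightening under the shift and the accompanying mild requirement $c_{i,j,\tau}^x\le 1$, $c_{i,j,\tau}^u\le 1$ for the relevant $\tau$, which I would either record as a standing well-posedness assumption on the tightening or read off from its explicit construction in Section~\ref{sec:tightening}. The distributed structure itself presents no obstacle, precisely because the candidate is a deterministic shift of the previous nominal plan.
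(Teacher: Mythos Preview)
Your argument is correct and is precisely the standard shifted-sequence argument the paper invokes; the paper's own proof is a one-line reference to this argument after noting that the distributed constraints can be aggregated into a centralized form. Your treatment is in fact more careful than the paper's, since you explicitly track the absolute-time indexing of the tightening constants and flag the implicit well-posedness requirement $c_{i,j,\tau}^x\le 1$, $c_{i,j,\tau}^u\le 1$ needed for the appended terminal step.
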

\begin{proof}
	The local constraints in (\ref{eq:dsmpc_problem}) can be combined to constraints on the global system state and the proof follows the standard argument in MPC using the shifted sequence from the previous time step, as similarly shown in \cite{Hewing2019} for the centralized case.
\end{proof}

\begin{remark}
	For the special case of a quadratic stage cost, Gaussian disturbances and a terminal weight satisfying the Lyapunov equation, the asymptotic convergence property shown in \cite{Hewing2018} can be extended to the distributed case.
\end{remark}

\section{Distributed Data-driven constraint tightening}\label{sec:tightening}
\begin{algorithm}[t]
	\caption{Computation of tightening values for all
	subsystems $i$, time steps $t$, and half spaces $j$.}
	\label{alg:constraint_tightening}
	\begin{algorithmic}[1]
		\REQUIRE Chance constraints~\eqref{eq:constraints}, confidence level $\beta$,
		$l=1,2,..,N_s$ samples $W^{(l)}$.
		\ENSURE Tightening values $c_{i,j,t}^x$ and $c_{i,j,t}^u$.
		\FOR{every sample $l=1,2,..,N_s$}
			\STATE $(\{e_i^{(l)}(t)\},\{\pi_i(e_{\mathcal N_i}^{(l)}(t))\})\leftarrow$ distributed simulation of error
				system~\eqref{eq:error_dynamics} and corresponding feedback using disturbances $W^{(l)}$ and initial condition $e_i^{(l)}(0)=0$.
		\ENDFOR
		\FOR{every agent $i=1,..,M$, time step $t=0,..,\bar N$}
			\FOR{every half-space $j=1,..,n_i^x$}
				\STATE Compute $c_{i,j,t}^x$ via Alg. 2 and $h_{i,j}^x,\{e_i^{(l)}(t)\},p_{i,j}^x, \beta$
			\ENDFOR
			\FOR{every half-space $j=1,..,n_i^u$}
				\STATE Compute $c_{i,j,t}^u$ via Alg. 2 and $h_{i,j}^u$,$\{\pi_i(e_{\mathcal N_i}^{(l)}(t))\}$,$p_{i,j}^u$, $\beta$
			\ENDFOR
		\ENDFOR
	\end{algorithmic}
\end{algorithm}
\begin{algorithm}[t]
	\caption{Single half-space tightening computation.}
	\label{alg:half_space_prs}
	\begin{algorithmic}[1]
		\REQUIRE Half-space direction $h\in\mathbb R^q$, samples $\xi^{(l)}\in\mathbb R^q$ with $l=1,..,N_s$,
			probability level $p$, and confidence $1-\beta$
		\ENSURE Tightening value $c$
		\STATE $N_d\leftarrow (1-p)N_s-\sqrt{2(1-p)N_s\ln\left(\frac{1}{\beta}\right)}$ 
		\WHILE{number of $\{\xi^{(l)}\} > N_s - N_d$}
			\STATE discard $\xi^{(l^*)}(t)$ with $l^* \leftarrow \argmax_{l}h^\top\xi^{(l)}$
		\ENDWHILE
		\STATE $c \leftarrow \max_{l}h^\top \xi^{(l)}$
	\end{algorithmic}
\end{algorithm}

In the following, we derive a distributed and data-driven algorithm by extending the centralized version in \cite{Hewing2019}, to obtain the tightening of nominal state and input constraints in (\ref{eq:dsmpc_tightened_state}) and (\ref{eq:dsmpc_tightened_input}), based on scenario rather than distributed numerical optimization.
Scenario optimization (see e.g., \cite{Calafiore2005}, \cite{Campi2011}) allows us to perform the tightening based on $N_s$ samples of the disturbance sequence $W^{(l)}$ with $l \in \{1, \cdots , N_s \}$.
With probability $1- \beta$, the resulting tightening ensures satisfaction of the chance constraints (\ref{eq:constraints}) in a non-conservative manner, meaning that if a constraint on the nominal state $z_i(t)$ or input $v_i(t)$ is always active, the probability of the real state $x_i(t)$ or applied input $u_i(t)$, respectively, violating the constraints is exactly $1\!-\!p_{i,j}^x$ and $1\!-\!p_{i,j}^u$ as specified in \eqref{eq:constraints}.
Thereby, the probability $1-\beta$ is related to the number of considered samples $N_s$ of the disturbance sequence.
Compared to related robust approaches, such as~\cite{Conte2013}, the proposed design procedure avoids the solution of a distributed optimization problem involving bilinear matrix inequalities to determine the constraint tightening, by instead making use of samples of the closed-loop error according to the dynamics~\eqref{eq:error_dynamics}.

Specifically, we compute tightening values $c_{i,j,t}^x$ and $c_{i,j,t}^u$, which ensure that the real local state $x_i(t) = z_i(t) + e_i(t)$ and input $u_i(t) = v_i(t) + \pi_{i}(e_{\mathcal{N}_i}(t))$ satisfy the half-space chance constraints (\ref{eq:constraints}) at the desired probability level if the tightened nominal constraints (\ref{eq:dsmpc_tightened_state}) and (\ref{eq:dsmpc_tightened_input}) are always active.
Therefore, we choose the minimal values $c_{i,j,t}^x$ and $c_{i,j,t}^u$ such that for each constraint $j$, time step $t$ and subsystem $i$
\begin{align*}
&\mathrm{Pr}( h^{x\top}_{i,j} e_i(t) \leq c_{i,j,t}^x) \ge p^x_{i,j},  && j \in  \{1, \ldots, n_i^x\}, \\
&\mathrm{Pr}(h^{u\top}_{i,j}\pi_i(e_{\mathcal N_i}(t))\leq c_{i,j,t}^u)\ge p^u_{i,j}, && j \in  \{1, \ldots, n_i^u\},
\end{align*}
holds, bounding the distribution of the local error dynamics~\eqref{eq:error_dynamics} and error feedback in the local half-space directions~$h_{i,j}^x$ and $h_{i,j}^u$.
The tightening values $c_{i,j,t}^x$ and $c_{i,j,t}^u$ can be obtained by solving the stochastic optimization problems%
\begin{subequations}\label{eq:stochastic_prs_problem}
	\begin{align}
		c_{i,j,t}^x&=\min c_x\text{ s.t. } \Pr(h_{i,j}^{x\top} e_i(t)\leq c_x)\geq p^x_{i,j},\\
		c_{i,j,t}^u&=\min c_u\text{ s.t. } \Pr(h_{i,j}^{u\top} \pi_i(e_{\mathcal N_i}(t))\leq c_u)\geq p^u_{i,j}.
	\end{align}
\end{subequations}
Arguments from scenario optimization allow us to approximate these stochastic optimization problems by sampled versions, where $h_{i,j}^{x\top} e_i^{(l)}(t)\leq c_x$ and $h_{i,j}^{u\top} \pi_i(e_{\mathcal N_i}^{(l)}(t))\leq c_u$ are enforced as deterministic constraints for sampled error trajectories $e_i^{(l)}(t)$ based on disturbance samples $W^{(l)}$ as detailed in Algorithm~\ref{alg:constraint_tightening}.
In fact, scenario-based optimization arguments~\cite{Campi2011} provide a confidence level $1-\beta$ at which the sample-based solution fulfills the probabilistic constraints in~\eqref{eq:stochastic_prs_problem} and even allow us to discard a certain fraction of the most restrictive samples.
The procedure is outlined in Algorithms~\ref{alg:constraint_tightening} and~\ref{alg:half_space_prs}.
In order to increase the confidence level $1-\beta$, a larger number of samples could be considered.

Algorithm~\ref{alg:constraint_tightening} takes the chance-constraints~\eqref{eq:constraints} as inputs, as well as disturbance samples $W^{(l)}$, and the confidence level parameter $\beta$, where $1-\beta$ corresponds to the confidence level of the scenario optimization, i.e., the confidence at which the computed constraint tightening results in closed-loop chance constraint satisfaction.
In a first step, we generate the relevant error scenarios by simulating the error system for each disturbance sample, see Algorithm~\ref{alg:constraint_tightening}, lines 1-3.
Note that the disturbance samples $W^{(l)}$ can be stored distributedly and that the simulation is a distributed operation requiring only neighbor-to-neighbor communication and therefore scales to arbitrarily large networks.
After generating the error scenarios, every agent can approximately solve~\eqref{eq:stochastic_prs_problem} for each state and input half-space separately in lines 5-10 using the subroutine in Algorithm~\ref{alg:half_space_prs}.
In Algorithm~\ref{alg:half_space_prs}, line 1, we first compute the number of scenarios $N_d$ that can be discarded based on the desired probability level $p$ and confidence level $1-\beta$, see \cite{Hewing2019,Campi2011} for details.
To determine the required half-space level $c$, we iterate over the disturbance samples	and discard the $N_d$ most restricting samples in Algorithm~\ref{alg:half_space_prs}, lines 2-4.
The most restrictive remaining disturbance sample is then used to obtain the required tightening value in line 5.
Note that increasing the number of samples $N_s$ either allows us to achieve a higher probability level $p$ for the chance constraints, or a higher confidence level $1- \beta$ of the scenario optimization problem.
Since the required number of samples scales logarithmically with $\beta$, the confidence level can typically be chosen to be very high~\cite{Campi2011}.
Note that the number of samples $N_{s,i}^{MPC}$ chosen to approximate the MPC cost is not related to the number of samples $N_s$ to perform the constraint tightening and does not affect constraint satisfaction guarantees.
In fact, one would typically have $N_s \gg N_{s,i}^{MPC}$ since the number of samples for constraint tightening does not affect the online computation, and the required offline computations are reasonably cheap.

Recursive feasibility of problem (\ref{eq:dsmpc_problem}) as shown in Theorem~\ref{thm:recursive_feasibility} and the tightened constraints on the nominal states (\ref{eq:dsmpc_tightened_state}) and inputs  (\ref{eq:dsmpc_tightened_input}) with constants $c_{i,j,t}^x$ and $c_{i,j,t}^u$ obtained using Algorithm~\ref{alg:constraint_tightening} allow us to establish a guarantee for the satisfaction of the chance constraints (\ref{eq:constraints}) on states $x_i(t)$ and inputs $u_i(t)$ of each subsystem in closed-loop.
\begin{theorem}\label{thm:constraint_satisfaction}
	Let $c_{i,j,t}^x$ and $c_{i,j,t}^u$ be obtained using Algorithm~\ref{alg:constraint_tightening} and the control law \eqref{eq:split_dynamics_controller} with \eqref{eq:nominal_input} be applied to the distributed system (\ref{eq:local_system_dynamics}).
	With probability $1-\beta$, the resulting local states $x_i(t)$ and inputs $u_i(t)$ satisfy the chance constraints in (\ref{eq:constraints}).
\end{theorem}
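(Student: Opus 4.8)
The plan is to combine the recursive feasibility result of Theorem~\ref{thm:recursive_feasibility} with the scenario-optimization guarantee underlying Algorithms~\ref{alg:constraint_tightening} and~\ref{alg:half_space_prs}, exploiting that the indirect feedback initialization in~\eqref{eq:dsmpc_initialization} forces the closed-loop nominal and error trajectories to coincide with the quantities used in the offline design. Throughout I would assume, as implicit in the statement, that~\eqref{eq:dsmpc_problem} is feasible at $t=0$, so that by Theorem~\ref{thm:recursive_feasibility} it stays feasible and the control law~\eqref{eq:split_dynamics_controller}, \eqref{eq:nominal_input} is well-defined for all $0 \le t \le \bar N - N$.

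First I would establish that the closed-loop nominal sequence $z_i(t) := z_i(0|t)$ obeys the nominal dynamics~\eqref{eq:nominal_dynamics} with $z_i(0)=x_i(0)$: this follows by induction from the initialization together with $z_i(0|t)=z_i(1|t-1)$ and the nominal prediction update, and it makes the $k=0$ instances of the tightened constraints~\eqref{eq:dsmpc_tightened_state}--\eqref{eq:dsmpc_tightened_input} valid \emph{in closed loop}, i.e. $h_{i,j}^{x\top} z_i(t) \le 1 - c_{i,j,t}^x$ and $h_{i,j}^{u\top} v_i(t) \le 1 - c_{i,j,t}^u$ for all $i,j,t$, with $v_i(t)=v_i^*(0|t)$. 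Because the nominal dynamics are valid in closed loop, the closed-loop error $e_i(t)=x_i(t)-z_i(t)$ then satisfies the \emph{autonomous} recursion~\eqref{eq:error_dynamics} with $e_i(0)=0$, independently of the optimizer; hence $e_i(t)$ and $\pi_i(e_{\mathcal N_i}(t))$ are the same measurable functions of $W\sim\mathcal Q$ as the simulated scenarios $e_i^{(l)}(t)$, $\pi_i^{(l)}(t)$ (line~2 of Algorithm~\ref{alg:constraint_tightening}) are of the offline samples $W^{(l)}\sim\mathcal Q$, and in particular are identically distributed.

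Next I would invoke the scenario-optimization result with sample removal (\cite{Campi2011,Hewing2019}). For a fixed $(i,j,t)$, Algorithm~\ref{alg:half_space_prs} returns the smallest scalar $c$ feasible for the $N_s-N_d$ retained one-dimensional samples $h_{i,j}^{x\top} e_i^{(l)}(t)$, obtained by greedily discarding the $N_d$ most violating ones, with $N_d$ chosen as in line~1; the associated Hoeffding-type bound then yields that, with confidence $1-\beta$ over the draw of $W^{(1)},\dots,W^{(N_s)}$, one has $\Pr_W(h_{i,j}^{x\top} e_i(t) \le c_{i,j,t}^x) \ge p_{i,j}^x$, and analogously $\Pr_W(h_{i,j}^{u\top}\pi_i(e_{\mathcal N_i}(t)) \le c_{i,j,t}^u) \ge p_{i,j}^u$. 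Combining with the previous step, $h_{i,j}^{x\top} x_i(t) = h_{i,j}^{x\top} z_i(t) + h_{i,j}^{x\top} e_i(t) \le 1$ whenever $h_{i,j}^{x\top} e_i(t) \le c_{i,j,t}^x$, so $\Pr_W(h_{i,j}^{x\top} x_i(t)\le 1)\ge p_{i,j}^x$ with confidence $1-\beta$, and likewise $\Pr_W(h_{i,j}^{u\top} u_i(t)\le 1)\ge p_{i,j}^u$ since $u_i(t)=v_i(t)+\pi_i(e_{\mathcal N_i}(t))$. A union bound over the finitely many triples $(i,j,t)$ --- which can be absorbed into the confidence parameter at the cost of only a logarithmic increase of $N_s$ --- yields the joint statement over all subsystems, half-spaces and time steps.

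The main obstacle I anticipate lies in making the scenario step rigorous: one must check that removal of the sample with the largest $h^\top\xi^{(l)}$ is an admissible removal rule for the a-posteriori scenario bound, correctly count the decision variables entering that bound (here a single scalar $c$ per half-space), and, crucially, argue that the generalization guarantee derived for the \emph{sampled} error carries over to the \emph{closed-loop} error --- which is precisely where the identical-distribution argument of the previous paragraph, and hence the indirect feedback structure, is indispensable. A secondary point requiring care is the bookkeeping of the union bound over all half-spaces and all $\bar N{+}1$ time steps so that the advertised confidence $1-\beta$ is actually attained.
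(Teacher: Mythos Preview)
Your proposal is correct and follows essentially the same line as the paper's own proof: invoke the scenario-optimization guarantee of \cite{Campi2011,Hewing2019} for each half-space tightening produced by Algorithm~\ref{alg:half_space_prs}, then combine it with the tightened nominal constraints~\eqref{eq:dsmpc_tightened_state}--\eqref{eq:dsmpc_tightened_input} holding in closed loop to conclude~\eqref{eq:constraints}. You are in fact more explicit than the paper on two points it leaves implicit---the identical-distribution link between the closed-loop error and the offline scenarios via the indirect feedback structure, and the union bound over all $(i,j,t)$, which the paper does not perform (its proof establishes the $1-\beta$ confidence only per individual half-space).
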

\begin{proof}
	The proof follows the proof of Theorem 3 in \cite{Hewing2019}, which is summarized here for completeness.
	Algorithm~\ref{alg:constraint_tightening} greedily discards $N_d$ of the initial $N_s$ samples $e_i^{(l)}$ and sets the tightening value $c_{i,j,t}^x$ as the maximum over the remaining samples of $h_{i,j}^{x\top}e_i^{(l)}(t)$ via Algorithm~\ref{alg:half_space_prs} line 5.
	Therefore, for all remaining samples it holds that $h_{i,j}^{x\top}e_i^{(l)}(t) \le c_{i,j,t}^x$.
	From scenario optimization, we then have with probability $1 - \beta$, that $\mathrm{Pr}( h^{x\top}_{i,j} e_i(t) \leq c_{i,j,t}^x) \ge p^x_{i,j}$. Therefore, constraining the local nominal state $z_i(t)$ to the tightened constraints (\ref{eq:dsmpc_tightened_state}) results in the real state of the system $x_i(t) = z_i(t) + e_i(t)$ fulfilling the chance constraints in (\ref{eq:constraints}).
	The same arguments hold for the input constraints by using Algorithm~\ref{alg:constraint_tightening} to obtain the tightening values $c_{i,j,t}^u$.
\end{proof}
\begin{remark}
	The chance constraint satisfaction property in Theorem~\ref{thm:constraint_satisfaction} renders the proposed DSMPC framework suitable for safety certification of distributed learning-based controllers in the line of \cite{Muntwiler2019}, i.e., using a distributed MPC to verify and modify a proposed learning input if necessary.
	While satisfaction of constraints can only be ensured in probability, the computational complexity and conservatism can be dramatically reduced compared with other distributed safety certification schemes, see, e.g., \cite{Muntwiler2019}, \cite{Larsen2017}. 
\end{remark}

\begin{remark}
	For disturbances with zero mean and known variance, e.g., $W \sim \mathcal{N}(0,\Sigma_W)$ and a distributed linear tube control law $\pi_i(e_{\mathcal{N}_i}(t)) = K_ie_{\mathcal{N}_i}(t)$ with $K_i \in \mathbb{R}^{n_i \times n_{\mathcal{N}_i}}$, one can analytically compute the mean and variance of the error sequence. Instead of a data-based tightening, an analytic tightening is then possible using the marginal local and neighborhood variances $\Sigma^e_i(t)$ and $\Sigma^e_{\mathcal{N}_i}(t)$, e.g.,\ as
	\begin{align*}
		c_{i,j,t}^x &= \phi^{-1}\!(p^x_{i,j}) \sqrt{h^\top \Sigma^e_i(t) h} ,&& j \in  \{1, \ldots, n_i^x\}, \\
		c_{i,j,t}^u &=\phi^{-1}\!(p^u_{i,j}) \sqrt{h^\top K_i\Sigma^e_{\mathcal{N}_i}(t)K_i^\top h}, && j \in  \{1, \ldots, n_i^u\},
	\end{align*}
	where $\phi^{-1}$ is the quantile function of the standard normal distribution and all computations can be easily carried out in a distributed manner.
	A related approach computing the full variance matrix was presented in \cite{Mark2019}, where, using a possibly conservative additional step, guarantees are given for all subsystems simultaneously.
\end{remark}

\section{SIMULATION EXAMPLE} \label{sec:simulation_example}

To highlight the effectiveness of the proposed DSMPC scheme we consider the example of a distributed cooling system as used in \cite{Recht2019}.
The task of the cooling system is to control the temperature of a server farm, which can similarly be interpreted, e.g., as the temperature of production machines in a big manufacturing plant.
Each local subsystem thereby has a heat source (e.g., heat production due to the computational load) and a cooling component (e.g., a fan or water cooling system).
The temperature of each subsystem affects the temperature of neighboring systems.
Cooling of the system is important in order to prevent defects due to overly high temperatures or safety shutdowns.
At the same time, excessive cooling should be prevented.

We consider a server farm with $M=100$ servers arranged on a regular $10$ by $10$ grid with equal spacing $r$.
Each server is thermally coupled with its direct neighbors in the grid.
The servers heat up due to their computational load, and their temperature influences that of neighboring servers.
Disturbances acting on each local server mimic the temperature increase (or decrease) due to high (or low) computations compared to the average computational load acting on the servers.
The computational load is assumed to have a known time-varying mean over the course of the day.

The local system dynamics are defined as 
\begin{equation}\label{eq:example_system_dynamics}
	x_i(t\!+\!1) = 1.01 x_i(t) + \sum_{j \in \mathcal{N}_i \backslash i}\frac{0.01}{1 + r}x_j(t) + u_i(t) + w_i(t),
\end{equation}
where $x_i(t)$ denotes the deviation from a desired temperature of operation $\bar{T}_i = 25^{\circ}C$, with the actual temperature $T_i(t) = \bar{T}_i + x_i(t)$, $u_i(t)$ denotes a local cooling input and $w_i(t)$ the disturbance acting on each server.
The local disturbances are modeled as a non-i.i.d. multivariate Gaussian distribution with a sinusoidal mean vector $\mu \in \mathbb{R}^{\bar{N}}$ and correlation in time over the task horizon due to the convariance matrix $\Sigma \in \mathbb{R}^{\bar{N} \times \bar{N}}$.
We introduce state and input constraints as 
\begin{subequations}
	\begin{align}
		-5 &\le x_i \le 5, \quad \forall i \in \mathcal{M}, \\
		-1 &\le u_i \le 1, \quad \forall i \in \mathcal{M},
	\end{align}
\end{subequations}
with desired probability level of $0.9$.
We use a tube controller $\pi_{i}(e_i(t)) = -0.5 e_i(t)$ for every subsystem resulting in the closed-loop nominal error system
\begin{equation}
	e_i(t\!+\!1) = 0.51 e_i(t) + \sum_{j \in \mathcal{N}_i \backslash i}\frac{0.01}{1 + r_{ij}}e_j(t),
\end{equation}
which is stable according to the Gershgorin Circle Theorem~\cite{Gerschgorin1931} if for all subsystems $i$
\begin{equation}
	0.51 + \sum_{j \in \mathcal{N}_i \backslash i}\frac{0.01}{1 + r_{ij}} < 1.
\end{equation}

We simulate the behavior of the DSMPC scheme for the system \eqref{eq:example_system_dynamics} with sampling time of $0.5h$ and prediction horizon of $N = 12h$ over an effective task horizon $\bar{N}-N = 2d$.
The constraints are tightened using $N_s = 100$ disturbance samples for each subsystem. The samples for subsystem~9 are shown in the third subplot of Fig.~\ref{fig:evolution_subsystem_9}.
The local stage costs are assumed to have the form
\begin{equation}
	l_t(x_i,u_i) = x_i^{\top}x_i + 1000 u_i^{\top}u_i,
\end{equation}
which represents high cooling costs. The MPC cost is approximated using $N_{s,i}^{MPC} = 10$ samples for each subsystem.

Fig. \ref{fig:evolution_all_systems} shows the temperature evolution and corresponding inputs over the course of two days.
While the chance constraints on the states are violated for only four subsystems, the input constraints always hold.
The upper two subplots of Fig. \ref{fig:evolution_subsystem_9} show temperature and inputs of subsystem $9$ including the nominal states and inputs and the respective time-varying nominal constraints.
The constraints on the nominal state and input are active at several instances in time.
\begin{figure}
	\centering
	\includegraphics[width=\columnwidth]{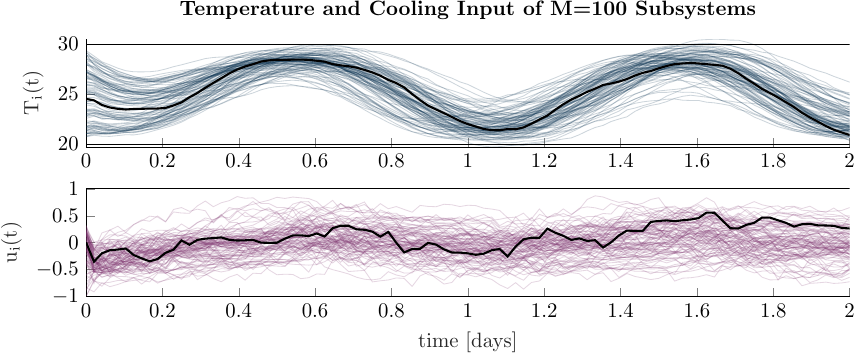}
	\caption{Evolution of the temperature and control input for all subsystems over time with the one of subsystem $9$ indicated in black.}
	\label{fig:evolution_all_systems}
\end{figure}
\begin{figure}
	\centering
	\includegraphics[width=\columnwidth]{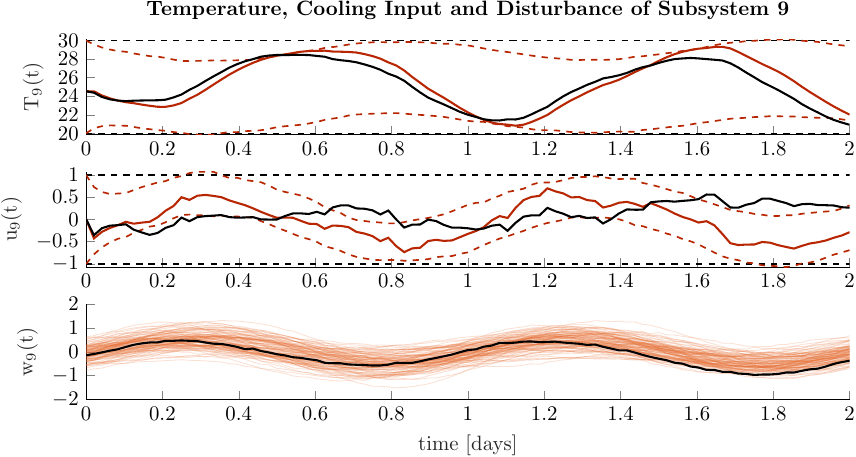}
	\caption{The upper two subplots show the evolution of the real temperature and input (in black), nominal temparature and input (in red) and real and tightened constraints (dashed lines) for subsystem 9.
	The third subplot shows the disturbance samples used for tightening the constraints (in orange) and the actual disturbance acting on the system (in black).}
	\label{fig:evolution_subsystem_9}
\end{figure}

\section{CONCLUSIONS} \label{sec:conclusion}

In this paper, we introduced a distributed stochastic MPC framework that ensures recursive feasibility, based on an indirect feedback formulation, and satisfaction of chance constraints in closed-loop in a non-conservative manner due to a data-driven and optimization-free constraint tightening approach.
Both, the offline controller synthesis as well as the online operation can be performed in a completely distributed manner, offering a scalable and high performance scheme with reduced conservatism compared with the literature.

%\addtolength{\textheight}{-12cm}   % This command serves to balance the column lengths
                                  % on the last page of the document manually. It shortens
                                  % the textheight of the last page by a suitable amount.
                                  % This command does not take effect until the next page
                                  % so it should come on the page before the last. Make
                                  % sure that you do not shorten the textheight too much.

\bibliography{bibliography}

\end{document}